\newtheorem*{theorem}{Theorem}
\begin{document}

\title{Stellar equilibrium in Einstein-Chern-Simons gravity}

\author{C.A.C. Quinzacara}
\email{crcortes@udec.cl}
\affiliation{Departamento de F\'{\i}sica, Universidad de Concepci\'{o}n, Casilla 160-C,
Concepci\'{o}n, Chile}

\author{P. Salgado}
\email{pasalgad@udec.cl}
\affiliation{Departamento de F\'{\i}sica, Universidad de Concepci\'{o}n, Casilla 160-C,
Concepci\'{o}n, Chile}

\begin{abstract}
We consider a spherically symmetric internal solution within the context of Einstein-Chern-Simons gravity and derive a generalized five-dimensional Tolman-Oppenheimer-Volkoff (TOV) equation. It is shown that the generalized TOV equation leads, in a certain limit, to the standard five-dimensional TOV equation.
\end{abstract}

\pacs{04.50.+h, 04.20.Jb, 04.90.+e}

\maketitle

\section{\textbf{Introduction}}

Some time ago was shown that the standard, five-dimensional General Relativity can be obtained from Chern-Simons gravity theory for a certain Lie algebra $\mathfrak{B}$ \cite{IMPRS09}, which was obtained from the AdS algebra and a particular semigroup $S$ by means of the S-expansion procedure introduced in Refs. \cite{IRS06,IPRS09}. 

The five-dimensional Chern-Simons Lagrangian for the $\mathfrak{B}$ algebra is given by \cite{IMPRS09}

\begin{equation}
L_{\mathrm{ChS}}^{(5)}=\alpha _{1}l^{2}\varepsilon
_{abcde}R^{ab}R^{cd}e^{e}+\alpha _{3}\varepsilon _{abcde}\left( \frac{2}{3}%
R^{ab}e^{c}e^{d}e^{e}+2l^{2}k^{ab}R^{cd}T^{\text{ }e}+l^{2}R^{ab}R^{cd}h^{e}%
\right) ,  \label{1}
\end{equation}
where $l$ is a length scale in the theory (see \cite{IMPRS09}), $R^{ab}= d\omega^{ab}+\omega^a_{\phantom{a}c}\omega^{cb}$ is the curvature two-form with $\omega^{ab}$ the spin connection, and $T^a = De^a$ with $D$ the covariant derivative with respect to the Lorentz piece of the connection.

From (\ref{1}) we can see that \cite{IMPRS09}:
\begin{enumerate}[(i)]
\item the Lagrangian is split into two independent pieces, one proportional to $\alpha_1$ and the other to $\alpha_3$. If one identifies the field $e^a$ with the vielbein, the piece proportional to $\alpha_3$ contains the Einstein-Hilbert term $\varepsilon_{abcde}R^{ab}e^ce^de^e$ plus non-linear couplings between the curvature and the bosonic ``matter" fields $h^a$ and $k^{ab} =-k^{ba}$, which transform as a vector and as a tensor under local Lorentz transformations, respectively.

\item In the strict limit where the coupling constant $l$ equals zero we obtain solely the Einstein–Hilbert term in the Lagrangian \cite{IMPRS09}.

\item In the five-dimensional case, the connection of Eq. (17) of Ref. \cite{IMPRS09} has two possible candidates to be identified with the vielbein (see \cite{EHTZ06}), namely, the fields $e^a$ and $h^a$, since both transform as vectors under local Lorentz transformations. Choosing $e^a$, makes the
Einstein-Hilbert term to appear in the action, and $T^a = De^a$ can be interpreted as the torsion two-form. This choice brings in the Einstein equations.
\end{enumerate}

It is the purpose of this letter to find the stellar interior solution of the Einstein-Chern-Simons field equations, which were obtained in Refs. \cite{salg1,salg2}. 

We derive the generalized five-dimensional Tolman-Oppenheimer-Volkoff (TOV) equation and then we show that this generalized TOV equation leads, in a certain limit, to the standard five-dimensional TOV equation.

\section{Einstein-Chern-Simons field equations for a spherically symmetric metric}

In Ref. \cite{salg1} it was found that in the presence of matter described by the Lagrangian $L_\text{M} = L_\text{M}(e^a,h^a,\omega{ab})$, we see that the corresponding field equations are given by
\begin{align}
\varepsilon _{abcde}R^{cd}T^{e}& =\ 0,  \notag \\
\alpha _{3}l^{2}\varepsilon _{abcde}R^{bc}R^{de}& =-\frac{\delta L_\text{M}}{%
\delta h^{a}},  \notag \\
\varepsilon _{abcde}\left( 2\alpha _{3}R^{bc}e^{d}e^{e}+\alpha
_{1}l^{2}R^{bc}R^{de}+2\alpha _{3}l^{2}D_{\omega }k^{bc}R^{de}\right) & =-%
\frac{\delta L_\text{M}}{\delta e^{a}},  \label{2} \\
2\varepsilon _{abcde}\!\!\left( \alpha _{1}l^{2}R^{cd}T^{\text{{}}e}\!+\!\alpha
_{3}l^{2}D_{\omega }k^{cd}T^{e}\!+\!\alpha _{3}e^{c}e^{d}T^{e}\!+\!\alpha
_{3}l^{2}R^{cd}D_{\omega }h^{e}\!+\!\alpha _{3}l^{2}R^{cd}k_{\text{ }%
f}^{e}e^{f}\right) \!& =-\frac{\delta L_\text{M}}{\delta \omega ^{ab}}.  \notag
\end{align}

For simplicity we will assume $T^a = 0$ and $k^{ab} = 0$. In
this case the field equations (\ref{2}) can be written in the form \cite{salg2} 
\begin{align}
de^{a}+\omega _{\text{ }b}^{a}e^{b}& =0,  \notag \\
\varepsilon _{abcde}R^{cd}D_{\omega }h^{e}& =0,  \notag \\
\alpha _{3}l^{2}\star \left( \varepsilon _{abcde}R^{bc}R^{de}\right)& =-\star \left( \frac{\delta L_\text{M}}{\delta h^{a}}%
\right) ,  \label{3}\\
\star \left( \varepsilon _{abcde}R^{bc}e^{d}e^{e}\right)+\frac{1}{2\alpha}l^{2} \star \left( \varepsilon _{abcde}R^{bc}R^{de}\right)& =\kappa_\text{E} T_{ab}e^{b},  \notag 
\end{align}
where $\alpha =\alpha_3/\alpha_a$, $\kappa_\text{E} =\kappa/2\alpha_3$, $T_{ab} =−\star(\delta L_\text{M}/\delta e^a)$, ``$\star"$ is the Hodge star operator (see Appendix \ref{apen02}) and $T_{ab}$ is the energy-momentum tensor of matter fields (for details see Ref. \cite{salg2}). 

Since we are assuming spherical symmetry the metric will be of the form 
\begin{equation}
ds^{2}=-e^{2f(r)}dt^{2}+e^{2g(r)}dr^{2}+r^{2}d\Omega _{3}^{2}=\eta
_{ab}e^{a}e^{b}  \label{4}
\end{equation}
where $d\Omega _{3}^{2}=d\theta _{1}^{2}+\sin ^{2}\theta _{1}d\theta
_{2}^{2}+\sin ^{2}\theta _{1}\sin ^{2}\theta _{2}d\theta _{3}^{2}$ and $\eta
_{ab}=\mathrm{diag}(-1,+1,+1,+1,+1)$. The two unknown functions $f(r)$ and 
$g(r)$ will not turn out to be the same as in Ref. \cite{salg2}. In Ref. \cite{salg2} was found a spherically symmetric exterior solution, i.e. a solution where $\rho(r)= p(r)=0$. Now $f(r)$ and $g(r)$ must satisfy the field equations inside the star, where $\rho(r)\neq 0$ and $p(r)\neq0$. For this we need the energy-momentum tensor for the stellar material, which is taken to be a perfect fluid. 

Introducing an orthonormal basis,
\begin{equation*}
e^{T}=e^{f(r)}dt,\quad e^{R}=e^{g(r)}dr,\quad e^{1}=rd\theta _{1},\quad
e^{2}=r\sin \theta _{1}d\theta _{2},\quad e^{3}=r\sin \theta _{1}\sin \theta
_{2}d\theta _{3}.
\end{equation*}
Taking the exterior derivatives, using Cartan's first structural equation $T^a =de^a+\omega^a_{\phantom{b}b}e^b=0$ and the antisymmetry of the
connection forms we find the non-zero connection forms. The use of Cartan's second structural equation permits to calculate the curvature matrix $R^a_{\phantom{a} b}= d\omega^a_{\phantom{a} b}+\omega^a_{\phantom{a}c}\omega^c_{\phantom{b}}$.

Introducing these results in (\ref{3}) and considering the energy-momentum tensor as the energy-momentum tensor of a perfect fluid at rest, i.e., $T_{TT} = \rho(r)$ and $T_{RR} = T_{ii} = p(r)$, where $\rho(r)$ and $p(r)$ are the energy density and pressure (for the perfect fluid), we find \cite{salg2}
\begin{align}
\frac{e^{-2g}}{r^{2}}\left( g^{\prime }r+e^{2g}-1\right) +\textrm{sgn}(\alpha)l^2\frac{e^{-2g}}{r^{3}}g^{\prime }\left( 1-e^{-2g}\right)
& = 
\frac{\kappa_\text{E} }{12}\rho,  \label{36'} \\
\frac{e^{-2g}}{r^{2}}\left( f^{\prime }r-e^{2g}+1\right) +\textrm{sgn}(\alpha)l^2\frac{e^{-2g}}{r^{3}}f^{\prime }\left( 1-e^{-2g}\right) & =
\frac{\kappa_\text{E} }{12}p,  \label{37'} \\
\frac{e^{-2g}}{r^{2}}\left\{\left( -f^{\prime }g^{\prime
}r^{2}+f^{\prime \prime }r^{2}+\left( f^{\prime }\right)
^{2}r^{2}+2f^{\prime }r-2g^{\prime }r-e^{2g}+1\right)\right. \ \ \ \ \ \ \ \ \ \ \ \ \ \ &  \notag \\
\left.+\textrm{sgn}(\alpha)l^2 \left( f^{\prime \prime }+\left(
f^{\prime }\right) ^{2}-f^{\prime }g^{\prime }-e^{-2g}f^{\prime \prime
}-e^{-2g}\left( f^{\prime }\right) ^{2}+3e^{-2g}f^{\prime }g^{\prime
}\right)\right\}& =\frac{\kappa_\text{E} }{4}p.
\label{38'}
\end{align}

\section{The generalized Tolman-Oppenheimer-Volkoff equation}

Since when the torsion is null, the energy-momentum tensor satisfies the following condition (see Appendix \ref{apen02}):
\begin{equation}  \label{se06}
D_{\omega}(\star T_a)=0,
\end{equation}
we find that, for a spherically symmetric metric, (\ref{se06}) yields
\begin{equation}  \label{se07}
f^{\prime }(r)=-\frac{p^{\prime }(r)}{\rho(r)+p(r)},
\end{equation}
an expression known as the hydrostatic equilibrium equation. 

Following the usual procedure, we find that (\ref{36'}) has the following solution:
\begin{equation}\label{42'}
e^{-2g(r)}=1+\text{sgn}(\alpha)\frac{r^2}{l^2}-\text{sgn}(\alpha)\sqrt{\frac{%
r^4}{l^4}+\text{sgn}(\alpha)\frac{\kappa_\text{E}}{6\pi^2 l^2}\mathcal{M}(r)},
\end{equation}
where the Newtonian mass $\mathcal{M}(r)$ is given by
\begin{equation}
\mathcal{M}(r)=2\pi ^{2}\int_{0}^{r}\rho (\bar{r})\bar{r}^{3}d\bar{r}.
\label{41'}
\end{equation}

On the other hand, from (\ref{37'}) we find that
\begin{equation}
\frac{df(r)}{dr}=f^{\prime }(r)=\text{sgn}(\alpha )\frac{\kappa _\text{E}\
p(r)r^{3}+12r(1-e^{-2g(r)})}{12l^{2}e^{-2g(r)}\left( 1-e^{-2g(r)}+\text{sgn}%
(\alpha )\frac{r^{2}}{l^{2}}\right) } . \label{se10}
\end{equation}

Introducing (\ref{se10}) into (\ref{se07}) we find
\begin{equation}
\frac{dp(r)}{dr}=p^{\prime }(r)=-\text{sgn}(\alpha )\frac{\left( \rho (r)+p(r)\right) \left(
\kappa _\text{E}\ p(r)r^{3}+12r(1-e^{-2g(r)})\right) }{12l^{2}e^{-2g(r)}\left(
1-e^{-2g(r)}+\text{sgn}(\alpha )\frac{r^{2}}{l^{2}}\right) }  \label{se11}
\end{equation}
and introducing (\ref{42'}) into (\ref{se11}) we obtain the generalized five-dimensional Tolman-Oppenheimer-Volkoff equation
\begin{align}
\frac{dp(r)}{dr}=& -\frac{\kappa _\text{E}\ \mathcal{M}(r)\rho (r)}{12\pi ^{2}r^{3}}%
\left( 1+\frac{p(r)}{\rho (r)}\right) \left( 1+\text{sgn}(\alpha )\frac{%
\kappa _\text{E}}{6\pi ^{2}r^{4}}l^{2}\mathcal{M}(r)\right) ^{-1/2}  \notag
 \\
& \qquad \times \left[ \frac{\pi ^{2}r^{4}p(r)}{\mathcal{M}(r)}-\frac{12\ 
\text{sgn}(\alpha )\pi ^{2}r^{4}}{\kappa _\text{E}\ l^{2}\mathcal{M}(r)}\left( 1-%
\sqrt{1+\text{sgn}(\alpha )\frac{\kappa _\text{E}}{6\pi ^{2}r^{4}}l^{2}\mathcal{M}%
(r)}\right) \right]  \label{se13} \\
& \qquad \qquad \times \left[ 1+\text{sgn}(\alpha )\frac{r^{2}}{l^{2}}\left(
1-\sqrt{1+\text{sgn}(\alpha )\frac{\kappa _\text{E}}{6\pi ^{2}r^{4}}l^{2}\mathcal{%
M}(r)}\right) \right] ^{-1}.\notag
\end{align}

From (\ref{se13}) we can see that in the case of small $l^2$ limit,
we can expand the root to first order in $l^2$. In fact
\begin{equation}  \label{se14}
\sqrt{1+\text{sgn}(\alpha)\frac{\kappa_\text{E}}{6\pi^2r^4}l^2\mathcal{M}(r)}=1+\text{sgn}(\alpha)\frac{\kappa_\text{E}}{12\pi^2r^4}l^2\mathcal{M}(r)+\mathcal{O}(l^4).
\end{equation}

Introducing (\ref{se14}) into (\ref{se13}) we find
\begin{equation}\label{ectov01}
\frac{dp(r)}{dr}\approx-\frac{\frac{\kappa _\text{E}\ \mathcal{M}(r)\rho (r)}{12\pi ^{2}r^{3}}
\left( 1+\frac{p(r)}{\rho (r)}\right)\left(1+ \frac{\pi ^{2}r^{4}p(r)}{\mathcal{M}(r)}\right)}{\left(1+\text{sgn}(\alpha)\frac{\kappa_\text{E}}{6\pi^2r^4}l^2\mathcal{M}(r)\right)\left(1-\frac{\kappa_\text{E}}{12\pi^2r^2}\mathcal{M}(r)\right)}
\end{equation}

From (\ref{ectov01}) we can see that, in the limit where $l\longrightarrow0$, we obtain
\begin{equation}\label{ectov02}
\frac{dp(r)}{dr}=p^{\prime }(r)\approx -\frac{\kappa _{E}\mathcal{M}(r)}{12\pi ^{2}r^{3}}\left( 1+%
\frac{p(r)}{\rho (r)}\right) \left( 1+\frac{\pi ^{2}r^{4}p(r)}{\mathcal{M}(r)%
}\right) \left( 1-\frac{\kappa }{12\pi ^{2}r^{2}}\mathcal{M}(r)\right) ^{-1},
\end{equation}
which is the standard five-dimensional Tolman-Oppenheimer-Volkoff equation (see Eq. (\ref{A28})) (compare with the four-dimensional case shown in Ref. \cite{weinberg}).

To solve the generalized TOV equation (\ref{se13}), an equation of state relating $\rho$ and $p$ is needed. This equation should be supplemented by the boundary condition that $p(R) = 0$ where $R$ is the radius of the star.

Given an equation of state $p(\rho)$, the problem can be formulated as a pair of first-order differential equations for $p(r)$, $\mathcal M(r)$ and $\rho(r)$, (\ref{se13}) and
\begin{equation}
\mathcal{M}^{\prime}(r)=2\pi^{2}r^3\rho(r),  \label{se15}
\end{equation}
with the initial condition $\mathcal M(0) = 0$. In addition, it is necessary to provide the initial condition $\rho(0) = \rho_0$.

Let us return to the problem of calculating the metric. Once we compute $\rho(r)$, $\mathcal M(r)$, and $p(r)$, we can immediately obtain $g(r)$ from (\ref{42'}) and $f(r)$ from (\ref{se10})
\begin{align}
f(r)=& -\int_{r}^{\infty }\ \frac{\kappa _\text{E}\ \mathcal{M}(\bar{r})}{12\pi
^{2}\bar{r}^{3}}\left( 1+\text{sgn}(\alpha )\frac{\kappa _\text{E}}{6\pi ^{2}\bar{%
r}^{4}}l^{2}\mathcal{M}(\bar{r})\right) ^{-1/2}   \notag \\
& \qquad \times \left[ \frac{\pi ^{2}\bar{r}^{4}p(\bar{r})}{\mathcal{M}(\bar{%
r})}-\frac{12\ \text{sgn}(\alpha )\pi ^{2}\bar{r}^{4}}{\kappa _\text{E}\ l^{2}%
\mathcal{M}(\bar{r})}\left( 1-\sqrt{1+\text{sgn}(\alpha )\frac{\kappa _\text{E}}{%
6\pi ^{2}\bar{r}^{4}}l^{2}\mathcal{M}(\bar{r})}\right) \right] \label{se16}  \\
& \qquad \qquad \times \left[ 1+\text{sgn}(\alpha )\frac{\bar{r}^{2}}{l^{2}}%
\left( 1-\sqrt{1+\text{sgn}(\alpha )\frac{\kappa _\text{E}}{6\pi ^{2}\bar{r}^{4}}%
l^{2}\mathcal{M}(\bar{r})}\right) \right] ^{-1}\ d\bar{r}\notag 
\end{align}
where we have set $f(\infty )=0$, a condition consistent with the asymptotic limit from the exterior solution.

It should be noted that if $r > R$, i.e., out of the star, the following conditions are satisfied:
\begin{equation}  \label{se17}
\mathcal{M}(r)=M\quad,\quad p(r)=\rho(r)=0.
\end{equation}
Integrating (\ref{se16}) we find
\begin{equation}
f(r)=\frac{1}{2}\ln \left[ 1+\text{sgn}(\alpha )\frac{r^{2}}{l^{2}}\left( 1-%
\sqrt{1+\text{sgn}(\alpha )\frac{\kappa _\text{E}}{6\pi ^{2}r^{4}}l^{2}M}\right) %
\right] , \label{se19}
\end{equation}%
so that
\begin{equation}
e^{2f(r)}=e^{-2g(r)}=1+\text{sgn}(\alpha )\frac{r^{2}}{l^{2}}-\text{sgn}%
(\alpha )\sqrt{\frac{r^{4}}{l^{4}}+\text{sgn}(\alpha )\frac{\kappa _\text{E}}{%
6\pi ^{2}l^{2}}M}, \label{se20}
\end{equation}
which coincides with the outer solution.

\subsection{Constant Density: $\protect\rho(r)=\protect\rho_0$}

We will now consider the solution of (\ref{se13}) in the case where
the energy density is constant, $\rho(r) = \rho_0$, inside the star. In this case the hydrostatic equilibrium equation (\ref{se07}) can be directly integrated,
\begin{equation}
\rho _{0}+p(r)=Ce^{-f(r)}, \label{se21}
\end{equation}
where $C$ is an integration constant. 

On the other hand, from (\ref{se15}) $\mathcal{M}(r)$ is given by
\begin{equation}
\mathcal{M}(r)=\frac{\pi ^{2}}{2}\rho
_{0}r^{4} . \label{se22}
\end{equation}

Introducing (\ref{se22}) into (\ref{42'}) we have
\begin{equation}
e^{-2g(r)}= 1+\text{sgn}(\alpha )\frac{r^{2}}{l^{2}}-\text{sgn}(\alpha )%
\sqrt{\frac{r^{4}}{l^{4}}+\text{sgn}(\alpha )\frac{\kappa _\text{E}}{12l^{2}}\rho
_{0}\ r^{4}}.\label{se23} 
\end{equation}

Now, let us add the field equations (\ref{36'}) and (\ref{37'})
\begin{equation}
\frac{e^{-2g}}{r^{3}}(f^{\prime }+g^{\prime })\left[ r^{2}+\text{sgn}(\alpha
)l^{2}\left( 1-e^{-2g}\right) \right] =\frac{\kappa _\text{E}}{12}(\rho _{0}+p).
\label{se24}
\end{equation}%

Using now (\ref{se21}), multiplying by $e^{-g}$ and integrating we have
\begin{equation}
e^{f}=\frac{\kappa _\text{E}}{12}Ce^{-g}\int \frac{r^{3}\ dr}{e^{-3g}\left[ r^{2}+%
\text{sgn}(\alpha )l^{2}\left( 1-e^{-2g}\right) \right] }+C_{0}e^{-g},
\label{se27}
\end{equation}
where $C_0$ is the corresponding integration constant. Since 
\begin{equation}
\int \frac{r^{3}\ dr}{e^{-3g}\left[ r^{2}+%
\text{sgn}(\alpha )l^{2}\left( 1-e^{-2g}\right) \right] }=\frac{-\text{sgn}(\alpha )l^{2}e^{g(r)}}{\sqrt{1+\text{sgn}%
(\alpha )\frac{\kappa _\text{E}}{12}l^{2}\rho _{0}}\,\left( 1-\sqrt{1+\text{sgn}(\alpha )\frac{\kappa _\text{E}}{12}l^{2}\rho _{0}\ }\right) }
\end{equation}
we find
\begin{equation}
e^{f}=C_{1}+C_{0}e^{-g} , \label{se30}
\end{equation}%
where
\begin{equation}
C_{1}:=-\frac{\text{sgn}(\alpha )\kappa _\text{E}\, l^{2}\,C}{12\sqrt{1+\text{sgn}%
(\alpha )\frac{\kappa _{E}}{12}l^{2}\rho _{0}}\,\left( 1-\sqrt{1+\text{sgn}%
(\alpha )\frac{\kappa _{E}}{12}l^{2}\rho _{0}\ }\right) }.  \label{se31}
\end{equation}

Then, we proceed to adjust the constants $C$, $C_0$, and $C_1$, so that the interior solution and exterior must match at $r =R$. In addition one should require that the pressure vanishes at $r =R$. 

The calculations give
\begin{equation}  \label{se34}
C=\rho_0\sqrt{1+\text{sgn}(\alpha)\frac{R^2}{l^2}\left(1-\sqrt{1+\text{sgn}%
(\alpha)\frac{\kappa_\text{E}}{12}l^2\rho_0}\right)},
\end{equation}

\begin{equation}
C_{1}=-\frac{\text{sgn}(\alpha )\kappa _\text{E}\ l^{2}\rho _{0}\sqrt{1+\text{sgn}%
(\alpha )\frac{R^{2}}{l^{2}}\left( 1-\sqrt{1+\text{sgn}(\alpha )\frac{\kappa
_\text{E}}{12}l^{2}\rho _{0}}\right) }}{12\sqrt{1+\text{sgn}(\alpha )\frac{\kappa
_\text{E}}{12}l^{2}\rho _{0}}\left( 1-\sqrt{1+\text{sgn}(\alpha )\frac{\kappa _\text{E}%
}{12}l^{2}\rho _{0}\ }\right) }  ,\label{se35}
\end{equation}%
and
\begin{equation}
C_{0}=-\frac{1}{\sqrt{1+\text{sgn}(\alpha )\frac{\kappa _\text{E}}{12}l^{2}\rho
_{0}}} . \label{se36}
\end{equation}

\section{Summary and outlook}

We have considered a spherically symmetric internal solution within the context of Einstein-Chern-Simons gravity. We derived the generalized five-dimensional Tolman-Oppenheimer-Volkoff (TOV) equation and then we proved that this generalized TOV equation leads, in a certain limit, in the standard five-dimensional TOV equation.

\begin{acknowledgments}
This work was supported in part by Direcci\'on de Investigaci\'on, Universidad de Concepci\'on through Grant \# 212.011.056-1.0 and in part by FONDECYT through Grant N° 1130653. One of the authors (C.A.C.Q) was supported by grants from the Comisión Nacional de Investigación Cient\'ifica y Tecnol\'ogica CONICYT and from the Universidad de Concepci\'on, Chile.
\end{acknowledgments}

\appendix

\section{The standard Tolman-Oppenheimer-Volkoff equation in 5D}

Let us recall that the energy-momentum tensor satisfies the condition
\begin{equation}
\nabla_{\mu}T^{\mu\nu}=0.  \label{A25}
\end{equation}

If $T_{TT}=\rho(r)$ and $T_{RR} =T_{ii} =p(r)$ we find
\begin{equation*}
\nabla_{\mu}T^{\mu r}=\frac{f'(r)\Bigl(\rho(r)+p(r)\Bigr)+p'(r)}{e^{2g(r)}},
\end{equation*}
so that
\begin{equation}
f^{\prime }=-\frac{p^{\prime }}{\rho +p},  \label{A26}
\end{equation}%
an expression known as the \emph{hydrostatic equilibrium equation}.

From Eqs. (A10) and (A22) of Ref. \cite{salg2} we find
\begin{equation}
f^{\prime }(r)=\frac{\kappa _\text{E}\mathcal{M}(r)}{12\pi ^{2}r^{3}}\left( 1+%
\frac{\pi ^{2}r^{4}p(r)}{\mathcal{M}(r)}\right) \left( 1-\frac{\kappa_\text{E} }{
12\pi ^{2}r^{2}}\mathcal{M}(r)\right) ^{-1}.  \label{A27}
\end{equation}
Introducing (\ref{A26}) into (\ref{A27}) we obtain the standard five-dimensional \emph{Tolman-Oppenheimer-Volkoff} equation
\begin{equation}
p^{\prime }(r)=-\frac{\kappa _\text{E}\mathcal{M}(r)}{12\pi ^{2}r^{3}}\left( 1+%
\frac{p(r)}{\rho (r)}\right) \left( 1+\frac{\pi ^{2}r^{4}p(r)}{\mathcal{M}(r)%
}\right) \left( 1-\frac{\kappa_\text{E} }{12\pi ^{2}r^{2}}\mathcal{M}(r)\right) ^{-1}.
\label{A28}
\end{equation}%
This may be compared with the four-dimensional case shown in equation (1.11.13) of reference \cite{weinberg}.

\section{Energy-momentum tensor\label{apen02}}

It is known that if the torsion is null, then the energy-momentum tensor is divergence-free, $\nabla_{\mu}T^{\mu\nu}=0$. The 1-\emph{form} energy-momentum is given by
\begin{equation}  \label{tenem01}
\hat T_a:=T_{\mu\nu} e_a^{\mu}\ dx^{\nu}.
\end{equation}

\begin{theorem}
If the energy-momentum tensor $T_{\mu \nu }$ and the 1-form energy-momentum $\hat{T}_{a}$ are related by equation (\ref{tenem01}), then in a torsion-free space-time
\begin{equation}
\nabla _{\mu }T_{\text{ }\nu }^{\mu }=-e_{\nu }^{a}\star D_{\omega }(\star 
\hat{T}_{a})
\end{equation}
\end{theorem}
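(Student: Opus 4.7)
The plan is to translate $\hat T_a$ into a frame-adapted form, apply $\star$, $D_\omega$, and $\star$ in sequence, and recognize the resulting $0$-form as the spacetime divergence of $T^\mu{}_\nu$ contracted with a vielbein. Three structural facts make each step mechanical: the torsion-free hypothesis $D_\omega e^a = 0$; metric-compatibility $D_\omega \eta_{ab} = 0$ together with the constancy of the Levi-Civita symbol, which gives $D_\omega \varepsilon_{abcde}=0$; and the vielbein postulate $\nabla_\mu e^\nu_a = 0$, which identifies flat-frame covariant derivatives with Levi-Civita derivatives carrying Lorentz indices.

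First I would rewrite (\ref{tenem01}) using flat components $T_{ab}:=T_{\mu\nu}e^\mu_a e^\nu_b$ as $\hat T_a = T_{ab}\,e^b$. The standard identity $\star e^b = \tfrac{1}{4!}\eta^{bg}\varepsilon_{gcdef}\,e^c\wedge e^d\wedge e^e\wedge e^f$ gives $\star\hat T_a = \tfrac{1}{4!}T_a{}^g\,\varepsilon_{gcdef}\,e^c\wedge e^d\wedge e^e\wedge e^f$. Applying $D_\omega$, the three structural facts annihilate every factor except $T_a{}^g$, producing $D_\omega(\star\hat T_a) = \tfrac{1}{4!}(D_\omega T_a{}^g)\wedge\varepsilon_{gcdef}\,e^c\wedge e^d\wedge e^e\wedge e^f$. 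Expanding $D_\omega T_a{}^g = (\nabla_h T_a{}^g)\,e^h$ in the orthonormal frame and using $e^h\wedge e^c\wedge e^d\wedge e^e\wedge e^f = \varepsilon^{hcdef}\omega_5$, the epsilon contraction $\varepsilon_{gcdef}\varepsilon^{hcdef} = 4!\,\delta^h_g$ collapses the expression to $D_\omega(\star\hat T_a) = (\nabla^b T_{ab})\,\omega_5$. One more Hodge dual, together with $\star\omega_5 = -1$ in Lorentzian signature $(-,+,+,+,+)$, yields $\star D_\omega(\star\hat T_a) = -\nabla^b T_{ab}$.

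Finally I would transcribe back to spacetime indices. By the vielbein postulate, $\nabla^b T_{ab} = e^\mu_a\,\nabla^\nu T_{\mu\nu}$, so contracting both sides with $e^a_\nu$ removes the vielbein and, using the symmetry of $T_{\mu\nu}$, leaves $e^a_\nu\,\nabla^b T_{ab} = \nabla_\mu T^\mu{}_\nu$; combined with the minus sign from $\star D_\omega(\star\hat T_a)$ this is precisely the claimed identity. The main obstacle I anticipate is sign bookkeeping, since two distinct conventions (Levi-Civita as a symbol versus as a tensor), the Lorentzian value of $\star\star$ on a $1$-form, and the chosen orientation of $\omega_5$ interact delicately; I would resolve this by first checking the identity at a point in a Riemann-normal orthonormal frame where $\omega(p)=0$ and $e^a_\mu(p)=\delta^a_\mu$, reducing the claim to the flat-space divergence formula, and then invoking general covariance to extend it to arbitrary points.
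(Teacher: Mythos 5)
Your proposal is correct and reaches the stated identity, but it runs the computation in a genuinely different arena than the paper does. You stay entirely in the orthonormal frame: write $\hat T_a=T_{ab}e^b$, use $D_\omega\varepsilon_{abcde}=0$, the torsion-free condition $D_\omega e^a=0$ and the vielbein postulate to get $D_\omega(\star\hat T_a)=(\nabla^bT_{ab})\,\omega_5$ by a single epsilon contraction, and then the only sign-sensitive input is $\star\omega_5=-1$, which indeed produces the minus sign in the theorem (your plan to fix conventions by a Riemann-normal-frame check is sound; just beware that $e^h e^c e^d e^e e^f=\varepsilon^{hcdef}\omega_5$ and $\varepsilon_{gcdef}\varepsilon^{hcdef}=4!\,\delta^h_g$ both hold with the permutation \emph{symbol}, whereas with the metric-raised epsilon each acquires a compensating minus — used consistently the result is unchanged). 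The paper instead works in the holonomic basis: it writes $\star\hat T_a=\frac{\sqrt{-g}}{4!}\epsilon_{\mu\nu\rho\sigma\tau}T_a{}^{\mu}dx^{\nu}dx^{\rho}dx^{\sigma}dx^{\tau}$, expands $-e^a_{\nu}\star D_\omega(\star\hat T_a)$ ``after some algebra'' into $\partial_\lambda(\sqrt{-g})$, $\partial_\lambda T_\nu{}^{\lambda}$ and spin-connection terms, and then invokes Weyl's lemma $\partial_\lambda e^a_\nu+\omega^a{}_{\lambda b}e^b_\nu-\Gamma^{\rho}_{\lambda\nu}e^a_\rho=0$ to trade $\omega$ for $\Gamma$ and recognize $\nabla_\lambda T_\nu{}^{\lambda}$. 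The key lemma is the same in both (the vielbein postulate), but your frame computation makes Lorentz covariance manifest and isolates the signs in one place, while the paper's coordinate computation lands directly on the familiar $\frac{1}{\sqrt{-g}}\partial_\lambda(\sqrt{-g}\,T_\nu{}^{\lambda})-\Gamma^{\rho}_{\lambda\nu}T_\rho{}^{\lambda}$ structure. Note also that both arguments naturally produce the divergence on the form index ($\nabla^{\lambda}T_{\nu\lambda}$), so identifying it with $\nabla_\mu T^{\mu}{}_{\nu}$ uses the symmetry of $T_{\mu\nu}$ — you state this explicitly, the paper leaves it implicit.
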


\begin{proof}
\begin{equation}
\star \hat{T}_{a}= \frac{\sqrt{-g}}{4!}\epsilon _{\mu \nu \rho \sigma \tau }T_{a}^{\text{ }%
\mu }\ dx^{\nu }dx^{\rho }dx^{\sigma }dx^{\tau }.
\end{equation}
After some algebra, we find
\begin{equation}
-e_{\nu }^{a}\star D_{\omega }(\star \hat{T}_{a})=\frac{1}{\sqrt{-g}}\partial _{\lambda }(\sqrt{-g})T_{\nu }^{\text{ \ }\lambda }+\partial _{\lambda }T_{\nu }^{\text{ \ }\lambda }-T_{a}^{\text{ \ }%
\lambda }\left( \partial _{\lambda }e_{\nu }^{a}+\omega _{\text{ }\lambda
b}^{a}e_{\nu }^{b}\right),
\end{equation}
and using the Weyl's lemma
\begin{equation}
\partial _{\lambda }e_{\nu }^{a}+\omega _{\text{ \ }\lambda b}^{a}e_{\nu
}^{b}-\Gamma _{\lambda \nu }^{\text{ \ }\rho }e_{\rho }^{a}=0,
\end{equation}
we obtain
\begin{equation}
-e_{\nu }^{a}\star D_{\omega }(\star \hat{T}_{a})= \frac{1}{\sqrt{-g}}\partial _{\lambda }(\sqrt{-g})T_{\nu }^{\text{ \ }\rho }+\partial _{\lambda }T_{\nu }^{\text{ \ }\lambda }-\Gamma _{\lambda \nu }^{\text{ \ \ }\rho} T_{\rho }^{\text{ \ }\lambda },
\end{equation}
\begin{equation}
-e_{\nu }^{a}\star D_{\omega }(\star \hat{T}_{a})= \partial _{\lambda
}T_{\nu }^{\text{ \ }\lambda }+\Gamma _{\lambda \rho }^{\text{ \ \ }\lambda
}T_{\nu }^{\text{ \ }\rho }-\Gamma _{\lambda \nu }^{\text{ \ }\rho }T_{\rho
}^{\text{ \ }\lambda } = \nabla _{\lambda }T_{\nu }^{\text{ \ }\lambda }.
\end{equation}

\end{proof}

\subsection{The Hodge star operator}

The Hodge star operator for a $p-$form $P=\frac{1}{p!}P_{\alpha_1\cdots \alpha_p}dx^{\alpha_1}\cdots dx^{\alpha_p}$ in a $d$-dimensional manifold with a non-singular metric tensor $g_{\mu\nu}$ is defined as
\begin{equation*}
\star P=\frac{\sqrt{|g|}}{(d-p)!p!}\varepsilon_{\alpha_1\cdots \alpha_d}g^{\alpha_1\beta_1}\cdots g^{\alpha_p\beta_p} P_{\beta_1\cdots \beta_p}dx^{\alpha_{p+1}}\cdots dx^{\alpha_d}\quad,
\end{equation*}
where $\varepsilon_{\alpha_1\cdots \alpha_d}$ is the total antisymmetric Levi-Civita tensor density of weight $-1$.

\subsection{Hydrostatic equilibrium equation}

Let us consider a spherically and static-symmetric metric in five dimensions. The $1$-form energy-momentum is given by
\begin{equation}
\hat T_a=T_{ab}e^b
\end{equation}
where $T_{ab}$ is the energy-momentum tensor in a comoving orthonormal frame. So, if the matter is a perfect fluid then  
\begin{equation}
T_{TT}=\rho(r)\quad,\quad T_{RR}=T_{ii}=p(r).
\end{equation}

Computing the conservation equation
\begin{equation}
D_{\omega }(\star \hat{T}_{a})=0
\end{equation}
we have
\begin{equation}
D_{\omega }(\star \hat{T}_{a})=D_{\omega }(T_{ab}\star e^{b})=\frac{1}{4!}%
\epsilon _{fbcde}(D_{\omega }T_{a}^{\text{ \ }f})e^{b}e^{c}e^{d}e^{e},
\end{equation}
where we have used the torsion-free condition $D_{\omega }e^{a}=0$. Therefore
\begin{equation}
D_{\omega }(\star \hat{T}_{a})=\frac{1}{4!}\epsilon _{fbcde}(dT_{a}^{\text{
\ }f}+\omega _{a}^{\text{ \ }g}T_{g}^{\text{ \ }f}+\omega _{\text{ }%
g}^{f}T_{a}^{\text{ \ }g})e^{b}e^{c}e^{d}e^{e}.
\end{equation}

The calculations give 
\begin{equation}
D_{\omega }(\star \hat{T}_{R})=e^{-g}\left( p^{\prime }+f^{\prime }(\rho
+p)\right) e^{T}e^{R}e^{1}e^{2}e^{3}=0
\end{equation}
from which we get the so-called hydrostatic equilibrium equation
\begin{equation}
p^{\prime }+f^{\prime }(\rho +p)=0.  \label{apeneec}
\end{equation}

\section{Dynamic of the field $h^a$}

We consider now the field $h^a$. Expanding the field $h^a = h^a_\mu\, dx^\mu$ in their holonomic index we have \cite{salg2}
\begin{equation}\label{campoha}
h_a=h_{\mu\nu}e_a^\mu\, dx^\nu
\end{equation}

For the space-time to be static and spherically symmetric, the field $h_{\mu\nu}$ must satisfy the Killing equation $\mathcal L_\xi h_{\mu\nu} =0$ for $\xi_0=\partial_t$ and the six generators of the sphere $S_3$ must be
\begin{align}
\xi_0&=\partial_t,\notag\\
\xi_1&=\partial_{\theta_3},\notag\\
\xi_2&=\sin\theta_3\ \partial_{\theta_2}+\cot\theta_2\cos\theta_3\ \partial_{\theta_3},\notag\\
\xi_3&=\sin\theta_2\sin\theta_3\ \partial_{\theta_1}+\cot\theta_1\cos\theta_2\sin\theta_3\ \partial_{\theta_2}+\cot\theta_1\csc\theta_2\cos\theta_3\ \partial_{\theta_3}\label{vk}\\
\xi_4&=\cos\theta_3\ \partial_{\theta_2}-\cot\theta_2\sin\theta_3\ \partial_{\theta_3},\notag\\
\xi_5&=\sin\theta_2\cos\theta_3\ \partial_{\theta_1}+\cot\theta_1\cos\theta_2\cos\theta_3\ \partial_{\theta_2}-\cot\theta_1\csc\theta_2\sin\theta_3\ \partial_{\theta_3},\notag\\
\xi_6&=\cos\theta_2\ \partial_{\theta_1}-\cot\theta_1\sin\theta_2\ \partial_{\theta_2}.\notag
\end{align}

Then, we have
\begin{align}
h^T&=h_{tt}(r)\ e^T+h_{tr}(r)\ e^R,\notag\\
h^R&=h_{rt}(r)\ e^T+h_{rr}(r)\ e^R, \label{caha01}\\
h^i&= h(r)\ e^i.\notag
\end{align}

From Eq. (\ref{3}) we know that the dynamic of the field $h^a$ is
given by
\begin{equation}\label{ecch01}
\epsilon_{abcde}R^{cd}D h^e=0
\end{equation}
with
\begin{equation}
D h^a=d h^a+\omega^a_{\phantom 1 b}h^b
\end{equation}
where
\begin{align}
D h^T&= e^{-g}\left(-h'_{tt}-f'h_{tt}+f'h_{rr}\right)\ e^Te^R,\label{C06}\\
D h^R&=e^{-g}\left(-h'_{rt}-f'h_{rt}+f'h_{tr}\right)\ e^Te^R,\label{C07}\\
D h^i&=\frac{e^{-g}}{r}\left(rh'+h-h_{rr}\right)\ e^Re^i-\frac{e^{-g}}{r}h_{rt}\ e^Te^i.\label{C08}
\end{align}

Introducing (\ref{C06} - \ref{C08}) into (\ref{ecch01}) we have
\begin{align}
h_{tr}&=h_{rt}=0,\label{C09}\\
h_r&=(rh)',\label{C10}\\
h'_t&=f'(h_r-h_t).\label{C11}
\end{align}

To find solutions to (\ref{C09}, \ref{C10}, \ref{C11}), we assume that
$h_t(r)$ depends on $r$ only through $f(r)$, namely
\begin{equation}\label{C12}
h_t(r)=h_t\Bigl(f(r)\Bigr)
\end{equation}
Introducing (\ref{C12}) into (\ref{C11}) we have
\begin{equation}
\frac{dh_t(f)}{df}f'(r)=f'(h_r-h_t)
\end{equation}
from which we obtain the following linear differential equation, which is of first order and inhomogeneous:
\begin{equation}\label{edohomo}
\dot h_t+h_t=h_r,
\end{equation}
where $\dot h_t:=\frac{dh_t(f)}{df}$. The homogeneous solution is given by
\begin{equation}
h_{t}^\text{h}(f)=Ae^{-f(r)},
\end{equation}
where $A$ is a constant to be determined. 

The particular solution depends on the shape of $h_r$. If we assume a functional relationship $h$ with $f$ , then the linearity of differential equation suggests the following ansatz:
\begin{equation}
h_r(r)=h_r\Bigl(f(r)\Bigr)=\sum_{n=0}^\infty B_ne^{n f(r)}+\sum_{m=2}^\infty C_{m}e^{-m f(r)},
\end{equation}
where $B_n$ and $C_m$ are real constants. So that the particular solution is given by
\begin{equation}
h_t^\text{p}(f)=\sum_{n=0}^\infty \frac{B_n}{n+1}e^{n f(r)}-\sum_{m=2}^\infty \frac{C_m}{m-1}e^{-m f(r)}.
\end{equation}

Therefore the general solution is of the form 
\begin{equation}
h_{t}\Bigl(f(r)\Bigr)=Ae^{-f(r)}+\sum_{n=0}^\infty \frac{B_n}{n+1}e^{n f(r)}-\sum_{m=2}^\infty \frac{C_m}{m-1}e^{-m f(r)}.
\end{equation}

From (\ref{C10}) we find
\begin{equation}\label{C19}
h(r)=\frac{1}{r}\left(\int h_r(r)\, dr+D\right),
\end{equation}
where $D$ is an integration constant. This means
\begin{equation}\label{C20}
h(r)=\frac{1}{r}\sum_{n=0}^\infty\left( B_n\int e^{n f(r)}\,dr\right)+\frac{1}{r}\sum_{m=2}^\infty \left(C_{m}\int e^{-m f(r)}\,dr\right)+\frac{D}{r},
\end{equation}
where $A$, $B_n$, and $C_m$ are arbitrary constants, and $-e^{2f(r)}$ is the metric coefficient $g_{00}$.

\subsection{Field asymptotically constant}

Consider the simplest case where
\begin{equation}
h_r(r)=h=\text{constant}
\end{equation}
in this case (\ref{C19}) leads
\begin{equation}
h(r)= h+\frac{D}{r}
\end{equation}
and
\begin{equation}
h_t(r)=Ae^{-f(r)}+h.
\end{equation}

Since the vielbein is regular at $r =0$ (center of the star), $h^a$ should also be regularly at $r = 0$, i.e. we should have $D =0$. Note that the coefficient $e^{f(r)}$ is regular at $r =0$ as can be seen from (\ref{C19}).

From (\ref{C20}) we can see that the asymptotic behavior of the metric coefficients is given by
\begin{equation}
e^{2f(r\rightarrow\infty)}=e^{-2g(r\rightarrow\infty)}=1.
\end{equation}

Thus the asymptotic behavior of the field $h^a$ is given by
\begin{equation}\label{C25}
h_r(r\rightarrow\infty)=h,\quad h(r\rightarrow\infty)= h,\quad h_t(r\rightarrow\infty)=A+h.
\end{equation}

\subsection{Constant density}

If the density is constant then the inner solution is given by (\ref{se23}) and (\ref{se30}). In this case the solution for the field $h^a$ is given by
\begin{equation}
h_r(r)=h,\quad h(r)= h
\end{equation}
and
\begin{equation}
h_t(r)=\left\{
\begin{array}{ll}
\displaystyle\frac{A}{C_0+C_1e^{-g(r)}}+h&\ \text{if } r<R,\\
\displaystyle\frac{A}{e^{-g(r)}}+h&\ \text{if } r\geq R,
\end{array} \right.
\end{equation}
where
\begin{equation}
e^{-g(r)}=\left\{
\begin{array}{ll}
\displaystyle\sqrt{1+\text{sgn}(\alpha )\frac{r^{2}}{l^{2}}-\text{sgn}%
(\alpha )\sqrt{\frac{r^{4}}{l^{4}}+\text{sgn}(\alpha )\frac{\kappa _\text{E}}{%
6\pi ^{2}l^{2}}\mathcal{M}(r)}} &\ \text{if } r<R\\\\
\displaystyle\sqrt{1+\text{sgn}(\alpha )\frac{r^{2}}{l^{2}}-\text{sgn}%
(\alpha )\sqrt{\frac{r^{4}}{l^{4}}+\text{sgn}(\alpha )\frac{\kappa _\text{E}}{%
6\pi ^{2}l^{2}}M}} &\ \text{if } r\geq R
\end{array} \right.
\end{equation}

\end{document}